\newcommand{\ket}[1]{|#1\rangle}
\newcommand{\COMMENT}[1]{}
\def\defeq{\mathrel{\mathop:}=} 
\definecolor{violet}{HTML}{53257F} 
\definecolor{green}{HTML}{257a7f}
\definecolor{brown}{HTML}{852e29}
\newtheorem{theorem}{Theorem}
\newtheorem{ring}{Ring}
\newtheorem{linie}{Line}
\newtheorem{define}{Definition}
\newtheorem{lemma}{Lemma}
\newtheorem*{observation*}{Observation}
\begin{document}
	
	\title{Limitations of nearest-neighbour quantum networks}
	
	\author{F.\ Hahn}\affiliation{Dahlem Center for Complex Quantum Systems, Freie Universit{\"a}t Berlin, 14195 Berlin, Germany}\affiliation{Electrical Engineering and Computer Science Department, Technische Universit{\"a}t Berlin, 10587 Berlin, Germany}
	\author{A.\ Dahlberg}\affiliation{QuTech, TU Delft, 2628CJ Delft, The Netherlands}
	\author{J.\ Eisert}\affiliation{Dahlem Center for Complex Quantum Systems, Freie Universit{\"a}t Berlin, 14195 Berlin, Germany}
	\affiliation{Fraunhofer Heinrich Hertz Institute, 10587 Berlin, Germany}
	\author{A.\ Pappa}\affiliation{Electrical Engineering and Computer Science Department, Technische Universit{\"a}t Berlin, 10587 Berlin, Germany}\affiliation{Fraunhofer Institute for Open Communication Systems - FOKUS, 10589 Berlin, Germany}

	\date{\today}
	
	\begin{abstract}
Quantum communication research has in recent years shifted to include multi-partite networks for which questions of quantum network routing naturally emerge. To understand the potential for multi-partite routing, we focus on the most promising architectures for future quantum networks – those connecting nodes close to
each other. Nearest-neighbour networks such as rings, lines, and grids, have been studied under different communication scenarios to facilitate the sharing of quantum resources, especially in the presence of bottlenecks.
We here analyze the potential of nearest-neighbour quantum networks and identify some serious limitations, by demonstrating that rings and lines cannot overcome common bottleneck communication problems.
	\end{abstract}
	
	\maketitle
	
	\section{Introduction}
	The idea of communication in
	multi-partite
	quantum networks beyond point-to-point quantum key distribution has gained substantial momentum in recent years. Such applications
	suggest that it should be possible to distribute quantum data stably over a wide area
	\cite{QuantumInternet,QuantumInternetWehner, Guha, VanMeter,SMIKW16,hahn2018quantum}. In this context, new
	and challenging questions of how to pursue routing and
	quantum network coding arise \cite{raussendorf_measurement-based_2003, debeaudrap14, EppingA}.
	
	A key multi-partite feature is the promise
	to solve communication bottlenecks in quantum networks. The best known and practically motivated example is perhaps the butterfly network \cite{Butterfly, hayashi_quantum_2007, meignant2021classical}, where two pairs of nodes intend to send quantum messages between them, bypassing the existing bottleneck in the network. To understand the potential and limitations of quantum network coding, a sound level of abstraction to discuss such delicate challenges is to capture the quantum state as a \emph{graph state} \cite{Hein04,Hein06,hahn2018quantum,dahlberg2020transforming,dahlberg2022complexity,dahlberg2020}. 
	
	But what is the underlying property that enables bypassing existing bottlenecks in the network? In the case of the butterfly cluster state, an instance of a graph state, $X$-measurements on the two `additional' nodes lead to the creation of two crossing maximally-entangled
    pairs that enable further quantum communication via teleportation. This is equivalent to a more widely-applicable technique that uses graph transformations called \emph{local complementations} (LCs). As shown in Ref.~\cite{hahn2018quantum}, LCs can reveal 'hidden' properties of the shared graph state and allow us to optimise the quantum resources available.
    
    It is generally more common to have a bottleneck in sparse networks, where the connectivity between the different nodes is limited. A particular case of sparse networks that is of great importance for quantum communication is that of nearest-neighbour architectures. Such networks allow quantum information to travel only over short distances and therefore aim to minimise the noise and losses in the transmission. The butterfly network is one of the smallest instances of a grid network, while other common nearest-neighbour architectures are lines and rings. 
    However, except for the case of the butterfly and related examples \cite{Butterfly, kobayashi2010perfect}, not much is known to date about what is possible in this type of network architectures. 
    
    In this work, we examine in detail whether we can extend the prominent
    example of the butterfly network to other nearest-neighbour architectures.
    We specifically ask whether simultaneous communication of two pairs of nodes is possible in bottleneck scenarios when the underlying architecture is a ring or a line. We conclude that these nearest-neighbour networks are unsuitable for bypassing bottlenecks, and a long-distance communication link is required. Finally, our techniques can find application to more general network topologies and bottleneck settings.
	
	\section{Methods}
	Throughout this work, we stay in the framework where quantum systems held by the respective parties of the network are qubits and are -- on a level of abstraction -- seen as being in pure quantum states. The connectivity pattern of the network is captured by a suitable graph, following the mindset of
	Refs.~\cite{hahn2018quantum,dahlberg2020transforming,dahlberg2022complexity,dahlberg2020}.
	A graph $G=(V,E)$ consists of a finite set of vertices $V\subsetneq \mathbb{N}$ and of edges $E\subseteq V \times V$. 
	We consider \emph{simple} graphs that neither contain edges connecting a vertex to itself nor multiple edges between the same pair of vertices.
	The set of vertices sharing an edge with vertex $v$ is called its \emph{neighbourhood} and denoted as $N_v$. The graph's adjacency matrix is
	\vspace{-0.1in}
	\begin{equation}
	\left(\Gamma_G\right)_{i,j}\defeq\begin{cases} 
	1 & \text{if }(i,j)\in E\\
	0 & \text{if }(i,j)\not\in E.
	\end{cases}
	\end{equation}
	A \emph{graph state} vector 
	\cite{Hein04}
	$\ket{G}$ is defined by $|V|$ qubits in 
	$\ket{+}\defeq (\ket{0}+\ket{1})/\sqrt{2}$ entangled via $CZ$ gates for each edge,
	\begin{equation}
	\ket{G}\defeq\prod_{(i,j)\in E} CZ_{i,j} \ket{+}^{\otimes |V|}.
	\end{equation}
	Physically important are \emph{local Clifford} operations, i.e.~local unitaries from the 	single-qubit Clifford group.
	Such operations on a graph state are interestingly reflected
	by simple transformations of the respective
	graph, namely local complementations 	\cite{Hein04,VandenNest1}. We define the following.
	
	\begin{define}[Local complementation]
		A graph $G=(V,E)$ and vertex $v\in V$ define a locally complemented graph $\tau_v(G)$ with adjacency matrix 
		\begin{equation}
		\Gamma_{\tau_v(G)}\defeq\Gamma_G+\Theta_v \mod 2,
		\end{equation}
		where $\Theta_v$ is the complete graph of the neighbourhood $N_v$.
	\end{define}
	The graph state vector obtained from local complementation with respect to node $v$ of graph $G$, is defined by $\ket{\tau_v(G)}\defeq U_v^\tau\ket{G}$, where $U_v^\tau\defeq ({-iX_v})^{1/2} ({i Z_{N_v}})^{1/2}$. 
	Deciding whether or not
	two graphs can be transformed into each other via sequential LCs is possible in polynomial time  \cite{VandenNest2}.
	As we only consider local Clifford operations and Pauli measurements, the resulting states remain graph states and can be described in terms of the pre-measurement graph with LCs and vertex deletions \cite{Hein04,Hein06}.
    \begin{figure*}[t]
    \centering
    	\includegraphics[width=1.0\textwidth,
    	trim={5.9cm 12.5cm 5.9cm 10.5cm}, 
    	clip]{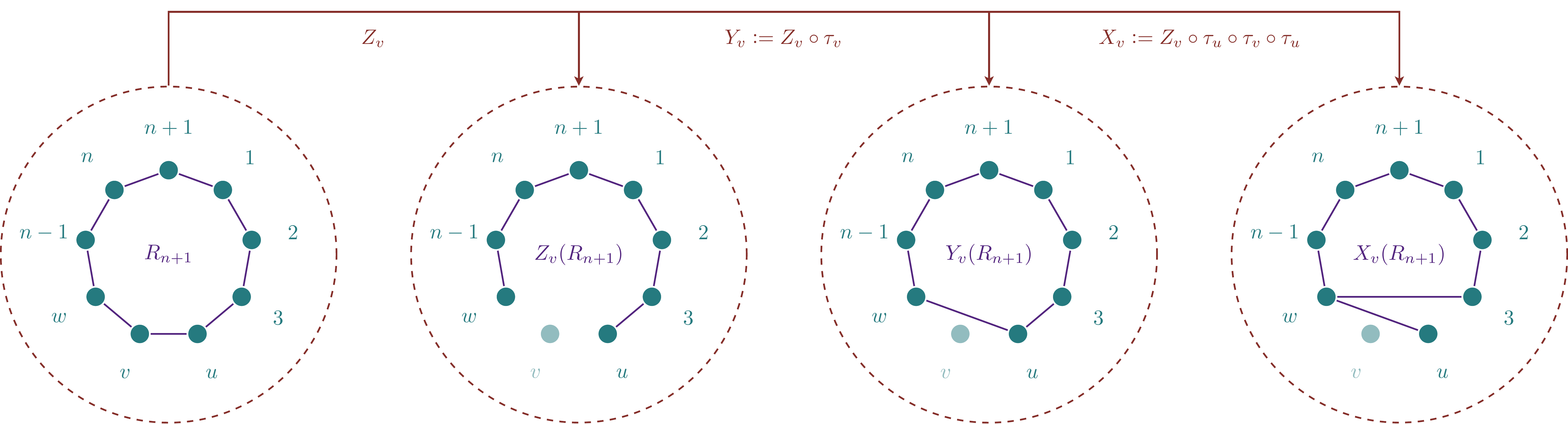}
    		\caption{{\bf Measurements on a ring.}
    		From left to right: Ring $R_{n+1}$ with $n+1$ vertices, node $v$ is measured in the $Z$-, $Y$- and $X$-basis.}
		    \label{fig:X_n+1}
		    \label{fig:Y_n+1}
		    \label{fig:Z_n+1}
		    \label{fig:n+1}
    		\label{fig:RingXYZ}
    \end{figure*}
    
    We hence define local Pauli measurements $P_v$ with respect to this graph action, i.e., $P_v \in \{X_v,Y_v,Z_v\}$ maps a graph with $n$ vertices to one with $n-1$ by removing $v$. Note that local Pauli measurements on $\ket{G}$ result in a different graph state up to local unitary corrections (cf.\ Prop.\ 7 in Ref.~\cite{Hein06}). We will omit these corrections for the sake of clarity.
    
	\begin{define}[Pauli measurements]\label{def:Paulimeasurements}
	The graph action of $Z_v$ is $Z_v(G)\defeq(\tilde{V},E \cap \tilde{V}\times\tilde{V})$ with $\tilde{V}\defeq V\setminus\{v\}$, that is, 
	deleting the row and column of $v$ from $\Gamma_{G}$ gives $\Gamma_{Z_v(G)}$.
	With local complementations we further have
	$
	Y_v(G)\defeq
	Z_v\circ \tau_v(G)
	$
	and
	$
	X_v(G)\defeq
	Z_v\circ\tau_w\circ \tau_v \circ\tau_w(G),
	$
	where $w\in N_v$.
	\end{define}

	A graph $H$ that can be obtained from a graph $G$ via a sequence of local complementations and vertex deletions is called a \emph{vertex-minor} of $G$ \cite{VertexMinor}.
	We will denote this as $H<G$ and call a graph \emph{$v$-minor} of another graph, if $v$ is
	the single vertex that has been
	deleted, e.g., $X_v(G),Y_v(G)$ and $Z_v(G)$ are $v$-minors of $G$. Deciding whether or not
	a graph $H$ is a vertex-minor of $G$ on some subset of vertices is \emph{NP-complete} \cite{dahlberg2022complexity}. Likewise, extracting a set of Bell pairs on a fixed set of vertices from a general graph is NP-complete \cite{dahlberg2020transforming}. Exactly because solving general problems in quantum network routing is provably hard, we focus on impossibility results for widely used network architectures: rings and lines. Due to their symmetry, we consider the former first. Theorems \ref{thm:nocrossing} and \ref{thm:nocrossingline} eliminate specific instances of -- what we expect to be -- commonly-encountered problems in future quantum networks.

\section{Ring graphs}

We first consider graph state vectors $|R_n\rangle$ corresponding to ring graphs, i.e.~ $R_n\defeq(V_n,E_n)$ with $V_n\defeq\{1,\ldots,n\}$ and $E_n\defeq\{(1,2),(2,3),\ldots,(n-1,n),(n,1)\}$ (Fig.~\ref{fig:n+1}). Our goal is to obtain two maximally entangled pairs between qubits $\{a_1,a_2\}$ and $\{b_1,b_2\}$ via local Clifford operations and Pauli measurements.
That is, we want to determine if the most simple graph with two connected components 
\begin{equation}
K_2\cup K_2 \defeq \left(\{a_1,a_2,b_1,b_2\},\{(a_1,a_2),(b_1,b_2)\}\right)
\end{equation}
is a vertex-minor of $R_n$. Without loss of generality we can restrict to $a_1<a_2$, $b_1<b_2$ and set $a_1=1$. In order to show that it is not possible to achieve our goal if $a_1<b_1<a_2<b_2$, we will make use of the following lemmas.

\begin{lemma}[Theorem 3.1 \cite{dahlberg2020}]\label{lem:XYZ}
Let $G$ and $H$ be two graphs and $(v_1,v_2,\ldots,v_k)$ be an ordered tuple of vertices that contains each element of $V_G \setminus V_{H}$ exactly once. We define the corresponding set of possible Pauli operations as
\begin{equation}
\mathcal{P}_{(v_1,v_2,\ldots,v_k)} \defeq \{P_{v_k}\circ P_{v_{k-1}}\circ \cdots \circ P_{v_1}| P_{v}\in \{X_v,Y_v,Z_v\}\}.
\end{equation}
Then $H$ is a vertex-minor of $G$ if and only if there exists an operation $P\in \mathcal{P}_{(v_1,v_2,\ldots,v_k)}$ such that $H$ can be obtained from $P(G)$ via a sequence of local complementations.
\end{lemma}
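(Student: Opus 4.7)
The plan is to prove the two directions separately, with the backward direction essentially immediate from Definition~\ref{def:Paulimeasurements} and the forward direction requiring a normal-form argument.

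For the backward direction, suppose $P = P_{v_k}\circ\cdots\circ P_{v_1}$ with each $P_{v_i}\in\{X_{v_i},Y_{v_i},Z_{v_i}\}$ sends $G$ to a graph $P(G)$ from which $H$ is reachable by LCs. Definition~\ref{def:Paulimeasurements} expresses each $P_{v_i}$ as a (short) sequence of local complementations on the current graph followed by the deletion $Z_{v_i}$. Concatenating these expansions gives a sequence of LCs and vertex deletions taking $G$ to $P(G)$; appending the subsequent LCs that produce $H$ yields a witness that $H<G$.

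The forward direction is where the work lies. The plan is to start from an arbitrary witness of $H<G$, i.e.\ a finite sequence $\sigma$ of LCs $\tau_{w}$ and deletions $Z_u$ taking $G$ to $H$, in which the $k$ deleted vertices are exactly $V_G\setminus V_H$ but in some a priori order $(u_1,\ldots,u_k)$ possibly different from the prescribed $(v_1,\ldots,v_k)$. I would then perform two types of rewriting moves on $\sigma$. First, I would push LCs towards the next deletion: any block of LCs that sits immediately before $Z_{u_i}$ can be absorbed into the choice of Pauli $P_{u_i}\in\{X_{u_i},Y_{u_i},Z_{u_i}\}$, because $\tau^2_v = \mathrm{id}$ (modulo LCs on $N_v$ that can be commuted further left) and by Definition~\ref{def:Paulimeasurements} the three Pauli measurements $X_v,Y_v,Z_v$ realise exactly the three distinct LC-closures of the deletion $Z_v$ with respect to operations at $v$ and one of its neighbours. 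After normalisation, $\sigma$ becomes an alternating sequence of Pauli measurements $P_{u_i}$ interspersed with LCs on still-undeleted vertices.

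Second, I would show that two adjacent Pauli measurements on distinct vertices can be commuted at the cost of changing their labels and prepending some LCs on the remaining graph; schematically,
\begin{equation}
P_v\circ P_w \;=\; \tau'\circ P'_w\circ P'_v,
\end{equation}
for suitable $P'_v,P'_w\in\{X,Y,Z\}$ and a (possibly empty) product $\tau'$ of local complementations on $V\setminus\{v,w\}$. This is a finite case analysis over the nine combinations $\{X,Y,Z\}^2$ and the two cases (adjacent or not) for the pair $\{v,w\}$, each case being verified directly from the adjacency-matrix update rules implicit in Definition~\ref{def:Paulimeasurements}. Using these swaps one sorts the deletions into the prescribed order $(v_1,\ldots,v_k)$ by a standard bubble-sort; the LCs that accumulate during the sort all act on vertices in $V_H$ and therefore merge into the final LC sequence that transports $P(G)$ to $H$.

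The main obstacle is the commutation step. The subtle case is when $v,w$ are adjacent and both measurements are in $Y$ (so each triggers a local complementation at the very vertex that appears in the other's neighbourhood); tracking the induced edges among third-party vertices must be done symbolically via $\Gamma\mapsto\Gamma+\Theta_v\bmod 2$. Once this case analysis is tabulated, the remainder of the argument is combinatorial bookkeeping, and the equivalence follows.
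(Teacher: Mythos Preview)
The paper does not prove this lemma; it is imported verbatim as Theorem~3.1 of Ref.~\cite{dahlberg2020} and used as a black box in the proofs of Theorems~\ref{thm:nocrossing} and~\ref{thm:nocrossingline}. There is thus no in-paper argument to compare against.

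On its own merits, your backward direction is fine. The forward direction has the right skeleton but a real gap in the first rewriting move. Your claim that ``any block of LCs that sits immediately before $Z_{u_i}$ can be absorbed into the choice of Pauli $P_{u_i}$'' is exactly Bouchet's theorem that every $v$-minor of $G$ is LC-equivalent to one of $X_v(G)$, $Y_v(G)$, $Z_v(G)$; this single-vertex statement is the non-trivial core of the whole lemma, and your stated justification (``$\tau_v^2=\mathrm{id}$ modulo LCs on $N_v$ that can be commuted further left'') does not establish it. Local complementations at vertices away from $v$ can reshape $N_v$ arbitrarily, so an arbitrary LC word before $Z_v$ cannot be reduced to ``at most one $\tau_v$ and one $\tau_w$'' by cancelling squares and commuting neighbours. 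You need to invoke (or reprove) Bouchet's three-minors theorem here, not derive it from involutivity.

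Your second move, the commutation $P_v\circ P_w=\tau'\circ P'_w\circ P'_v$ via an $18$-case table, is correct in principle but heavier than needed. A cleaner route: observe that $\tau_u$ commutes with $Z_w$ for $u\neq w$ (both sides toggle edges within $N_u\setminus\{w\}$ and delete all edges at $w$) and that the $Z$-deletions pairwise commute. Hence any witnessing sequence for $H<G$ can be rewritten as a single LC word followed by $Z_{v_k}\circ\cdots\circ Z_{v_1}$ already in the prescribed order. Then apply the single-vertex three-minors fact inductively at each $Z_{v_i}$ to trade the leading LCs for a Pauli label plus trailing LCs on the surviving vertices. This eliminates the case analysis and makes transparent that the only non-elementary ingredient is the one you glossed over.
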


It will also be useful to single out two specific types of vertices, namely \emph{leaves} and \emph{axils}.

\begin{define}[Leaf and axil]\label{def:leafaxil}
A leaf is a vertex with degree one. An axil is the unique neighbour of a leaf.
\end{define}

For leaves and axils, we have the following lemma regarding the relevant vertex-minors.

\begin{lemma}[Theorem 2.7 \cite{Wehner18}]\label{lem:leafaxil}
Let $G$ and $H$ be graphs and $v$ be a vertex in $V_G$ but not in $V_H$. Then it holds that:
\begin{enumerate}
\item [(a)] If $v$ is a leaf: $H<G \Leftrightarrow H<G \backslash v$.
\item [(b)] If $v$ is an axil: $H<G \Leftrightarrow H<\tau_w\circ\tau_v(G) \backslash v$, where $w$ is the leaf associated with $v$.
\end{enumerate}
\end{lemma}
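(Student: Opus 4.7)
The plan is to apply Lemma~\ref{lem:XYZ} with $v$ placed first in the ordering of vertices to be deleted, reducing both parts to a case analysis over the Pauli choice $P_v \in \{X_v, Y_v, Z_v\}$.

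For part (a), the implication $H<G\setminus v \Rightarrow H<G$ is immediate, since $G\setminus v = Z_v(G)$ is itself a vertex-minor of $G$. For the converse, Lemma~\ref{lem:XYZ} yields $H < P_v(G)$ for some choice of $P_v$, and I would show that $P_v(G)$ is always a vertex-minor of $G\setminus v$. The case $P_v = Z_v$ is trivial. For $P_v = Y_v = Z_v \circ \tau_v$, the key observation is that when $v$ is a leaf, $|N_v|=1$, so $\Theta_v$ has no edges and $\tau_v$ acts as the identity on $G$; thus $Y_v(G) = Z_v(G) = G\setminus v$. For the remaining case $P_v = X_v$, I would unfold $X_v(G) = Z_v\circ\tau_w\circ\tau_v\circ\tau_w(G)$, where $w$ is the unique axil, tracking how the neighbourhood of $v$ evolves through each of the three local complementations, and showing that $X_v(G)$ can be obtained from $G\setminus v$ by further local complementations.

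For part (b), the implication $H<\tau_w\circ\tau_v(G)\setminus v \Rightarrow H<G$ is immediate since $\tau_w\circ\tau_v(G)\setminus v$ is obtained from $G$ by local complementations and a vertex deletion. For the converse, the key structural observation is that $\tau_w\circ\tau_v$ swaps the leaf--axil roles of $v$ and $w$. Concretely, $\tau_v$ makes the neighbourhood of $v$ in $G$ a clique and thereby connects $w$ to every other neighbour of $v$; then $\tau_w$, now acting with the enlarged neighbourhood $\{v\}\cup(N_v\setminus\{w\})$, toggles off every edge from $v$ to those new neighbours, leaving $v$ adjacent only to $w$. Hence in $\tau_w\circ\tau_v(G)$ the vertex $v$ is a leaf with axil $w$. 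Since $\tau_w\circ\tau_v(G)$ is obtained from $G$ by local complementations alone, the two graphs share the same vertex-minors, so $H<G \Leftrightarrow H<\tau_w\circ\tau_v(G)$. Applying part (a) to $\tau_w\circ\tau_v(G)$ with $v$ now a leaf yields $H<\tau_w\circ\tau_v(G) \Leftrightarrow H<\tau_w\circ\tau_v(G)\setminus v$, completing the reduction.

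The main obstacle is the $X_v$ case in part (a): because the triple complementation $\tau_w\circ\tau_v\circ\tau_w$ can substantially modify the graph (in particular, it can render the axil $w$ effectively disconnected from its other neighbours in $X_v(G)$), carefully verifying that $X_v(G)$ is a vertex-minor of $G\setminus v$ is the central technical step. Once part (a) is in hand, part (b) follows cleanly via the leaf--axil swap under $\tau_w\circ\tau_v$.
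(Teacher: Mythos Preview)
Your reduction of (b) to (a) via the leaf--axil swap under $\tau_w\circ\tau_v$ is precisely the paper's treatment: its only content for this lemma is the one-line remark that ``$(b)$ follows from $(a)$ as leaf and associated axil can be transformed into each other via local complementation,'' with part~(a) itself not proved but cited from Ref.~\cite{Wehner18}.

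Your plan for part~(a), however, breaks in the $X_v$ case. The claim that $X_v(G)$ can be obtained from $G\setminus v$ by local complementations is false. When $v$ is a leaf with axil $w$, using the pivot identity $\tau_w\tau_v\tau_w=\tau_v\tau_w\tau_v$ together with $\tau_v(G)=G$ one finds that $X_v(G)$ equals $G\setminus v$ with all edges incident to $w$ removed, i.e.\ $G\setminus\{v,w\}$ together with an isolated copy of $w$. For $G$ the path $1$--$2$--$3$ with $v=1$, this makes $X_v(G)$ two isolated vertices while $G\setminus v$ is a single edge; these are not LC-equivalent, so $X_v(G)$ is not a vertex-minor of $G\setminus v$. (The same example shows that the lemma as literally stated needs an extra hypothesis: the edgeless graph on $\{2,3\}$ is a vertex-minor of $G$ but not of $G\setminus v$.) The route that works exploits this structure rather than avoids it: since $w$ is isolated in $X_v(G)$, any $H<X_v(G)$ with $w\notin V_H$ already satisfies $H<G\setminus\{v,w\}<G\setminus v$, while any $H<X_v(G)$ with $w\in V_H$ must have $w$ isolated in $H$---a case excluded once $H$ is assumed to have no isolated vertices, which is the regime of the cited theorem and covers the paper's target $H=K_2\cup K_2$.
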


Note that $(b)$ follows from $(a)$ as leaf and associated axil can be transformed into each other via local complementation. With Lemmas~\ref{lem:XYZ}  and~\ref{lem:leafaxil} we can prove our no-go results (Theorems~\ref{thm:nocrossing} and \ref{thm:nocrossingline}). In combination with Theorem~\ref{thm:LCfoliage}, we provide a tool that can find application to more general network architectures that are not limited to nearest-neighbour ones.

\begin{theorem}[No crossing on a ring]\label{thm:nocrossing}
It is not possible to extract two maximally entangled pairs from $|R_n\rangle$ if $a_1=1<b_1<a_2<b_2$ for any $n\in \mathbb{N}$ with local Clifford operations, local Pauli measurements and classical communication.
\end{theorem}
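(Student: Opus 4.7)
The plan is strong induction on $n$, using Lemma~\ref{lem:XYZ} to reduce an arbitrary extraction strategy to a case analysis on the first Pauli measurement. For the base case $n=4$ every ring vertex is already a target and no measurement remains, so the question reduces to the LC-equivalence of $R_4$ and $K_2\cup K_2$, which is ruled out because local complementations preserve the number of connected components and $R_4$ is connected while $K_2\cup K_2$ is not.

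For the inductive step $n\ge 5$, suppose for contradiction that $K_2\cup K_2$ is a vertex-minor of $R_n$ with the crossing pairing. Lemma~\ref{lem:XYZ} supplies an ordering $(v_1,\ldots,v_{n-4})$ of the non-target vertices and a Pauli choice $P_{v_i}\in\{X_{v_i},Y_{v_i},Z_{v_i}\}$ whose composition produces a graph LC-equivalent to $K_2\cup K_2$. Since we are free to permute this tuple, we select $v_1$ to be a non-target having at least one non-target ring-neighbour whenever such a vertex exists.

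The action of the three Paulis on the ring can be read off directly from Fig.~\ref{fig:RingXYZ}: $Z_{v_1}(R_n)$ is a line $L_{n-1}$ carrying the four targets in the same crossing cyclic order, $Y_{v_1}(R_n)=R_{n-1}$ is the smaller ring with the crossing preserved, and $X_{v_1}(R_n)$ is a ring $R_{n-2}$ together with a pendant leaf attached to a former ring-neighbour of $v_1$. The $Z$-case contradicts Theorem~\ref{thm:nocrossingline}; the $Y$-case is closed by the induction hypothesis; and the $X$-case reduces to a smaller ring $R_{n-2}$ (still with the crossing preserved) after the pendant leaf is deleted via Lemma~\ref{lem:leafaxil}(a), provided that leaf is not itself a target. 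Since the $X$-formula of Def.~\ref{def:Paulimeasurements} depends on a choice of $w\in N_{v_1}$ that determines which of the two former ring-neighbours becomes the leaf, the selection of $v_1$ above ensures that a non-target leaf can always be arranged.

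The hard part will be the residual sub-case of the $X$-measurement in which both ring-neighbours of $v_1$ are targets, so that no choice of $w$ makes the pendant leaf a non-target. A pigeonhole argument shows this can only occur if every non-target of $R_n$ is sandwiched between two targets; since only four arcs separate the four targets around the ring, this forces $n\le 8$. These finitely many remaining configurations are then verified by explicit analysis of their LC orbits, or handled uniformly by appealing to the foliage invariant of Theorem~\ref{thm:LCfoliage}, which separates the ring graph with any crossing target assignment from $K_2\cup K_2$.
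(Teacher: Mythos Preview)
Your inductive scheme and the $Y$/$Z$ reductions coincide with the paper's proof; the divergence is in the $X$-case, and there the residual handling has a genuine gap.

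Your claim that Theorem~\ref{thm:LCfoliage} ``separates the ring graph with any crossing target assignment from $K_2\cup K_2$'' is not correct as stated: for $n\ge 5$ the ring $R_n$ has \emph{empty} foliage (no leaves, no axils, no twins), so the invariant says nothing at the level of $R_n$ itself. The foliage theorem becomes useful only \emph{after} the $X$-measurement, where $X_{v_1}(R_n)$ carries a leaf--axil pair $\{u,w\}$ consisting of the two former ring-neighbours of $v_1$. The paper's argument is that this specific pair remains a leaf--axil pair or a twin pair under every local complementation (Theorem~\ref{thm:LCfoliage} together with the transitions in Fig.~\ref{fig:foliage}) and also under deletion of any third vertex; hence in any vertex-minor containing both $u$ and $w$ they lie in the same connected component. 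Since adjacent targets on the ring are always from different Bell pairs under the crossing hypothesis $a_1<b_1<a_2<b_2$, this rules out $K_2\cup K_2$ directly. That is the missing idea in your residual case.

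Once you have this argument, your selection trick for $v_1$ and the pigeonhole bound $n\le 8$ are superfluous: the paper handles the $X$-case for every $v$ uniformly by distinguishing (i) both neighbours of $v$ are targets --- impossible if they would form a Bell pair, and excluded by the foliage-pair argument otherwise; (ii) only the axil is a target --- delete the leaf via Lemma~\ref{lem:leafaxil}(a) and recurse on the smaller ring; (iii) only the leaf is a target --- swap leaf and axil via Lemma~\ref{lem:leafaxil}(b) and recurse. Your alternative route of clearing finitely many small cases by explicit LC-orbit computation would also work in principle, but it requires checking vertex-minor (not merely LC-equivalence) relations for all crossing placements on $R_5,\ldots,R_8$, which you have not supplied.
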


\begin{proof}
The proof works by induction. The base case is trivial for  $1 \leq n\leq 4$. For $n=5$ and $n=6$ it can be derived by Propositions 1 and 2 in Ref.~\cite{hahn2018quantum}, since ring graphs have a bottleneck with respect to communication requests of the type  $a_1<b_1<a_2<b_2$. 

For the inductive step, we now assume that Theorem~\ref{thm:nocrossing} holds up to a given $n$. We can then build an argument on three case distinctions to show the same follows for $n+1$. In order to see this, note that $K_2 \cup K_2$ is a vertex-minor of $R_{n+1}$ if and only if it is a vertex-minor of at least one of $X(R_{n+1})$, $Y(R_{n+1})$ or $Z(R_{n+1})$ (Lemma~\ref{lem:XYZ}). In the following we will show that it is not a vertex-minor of any of them. More specifically, any $v$-minor $H$ of $R_{n+1}$ is, according to Lemma \ref{lem:XYZ}, equivalent to either $X_v(R_{n+1}),Y_v(R_{n+1})$ or $Z_v(R_{n+1})$ via a sequence of local complementations. As the vertex-minor relationship is inherited via local complementations, $K_2 \cup K_2$ can only be a vertex-minor of $H$ if it is a vertex-minor of at least one of $X(R_{n+1})$, $Y(R_{n+1})$ or $Z(R_{n+1})$. It is therefore enough to show that $K_2 \cup K_2\not < P_v(R_{n+1})$ for all $P_v \in \{X_v,Y_v,Z_v\}$, where we have $v \notin \{a_1,a_2,b_1,b_2\}$. We consider the three cases separately.

\begin{ring}[]\label{claim:1}
$K_2 \cup K_2\not < Z_v(R_{n+1})$.
\end{ring}

\noindent $Z_v(R_{n+1})$ is the line graph $L_n$ with $n$ vertices as depicted in Fig.~\ref{fig:Z_n+1}. With Theorem~\ref{thm:nocrossingline} we find that maximally entangled pairs $(a_1,a_2)$ and $(b_1,b_2)$ can not be extracted from the corresponding graph state vector $|L_n\rangle$, i.e., $K_2 \cup K_2\not < Z_v(R_{n+1})$.

\begin{ring}[]\label{claim:2}
$K_2 \cup K_2\not < Y_v(R_{n+1})$.
\end{ring}
 
\noindent Since $Y_v(R_{n+1})=R_n$ (see also Fig.~\ref{fig:Y_n+1}), we can use our induction hypothesis to infer that $|Y_v(R_{n+1})\rangle$ does not allow for the extraction of  maximally entangled pairs $(a_1,a_2)$ and $(b_1,b_2)$, that is, we have $K_2 \cup K_2\not < Y_v(R_{n+1})$.

\begin{ring}[]\label{claim:3}
$K_2 \cup K_2\not < X_v(R_{n+1})$.
\end{ring}
 
\noindent $X_v(R_{n+1})$ is the ring graph $R_n$ with an additional leaf as depicted in Fig.~\ref{fig:X_n+1}. The former neighbours of $v$ within the graph $R_{n+1}$ constitute leaf $u$ and axil $w$ -- note that the roles of $u,w$ are reversed if the other vertex is chosen as a special neighbour in the sense of Def.~\ref{def:Paulimeasurements}.

 If both leaf and axil are part of the target graph and constitute one of the target Bell pairs, i.e.,  $\{u,w\}\in\{\{a_1,a_2\},\{b_1,b_2\}\}$, this contradicts the assumption $a_1<b_1<a_2<b_2$, since there is no vertex between $u$ and $w$. Both leaf and axil can also not be part of the target graph while being in different Bell pairs:
 Leaf-axil pairs either remain such pairs under local complementations or turn into twins (cf. Theorem~\ref{thm:LCfoliage}, Fig.~\ref{fig:foliage} and Defs.~\ref{def:twin}, \ref{def:foliage}). 
This is a contradiction to $u$ and $w$ being in different Bell pairs of the target graph, since measuring the neighbourhood of axils or twins can never result in a graph with two connected components.
If just the axil $w$ is part of the target graph, Lemma \ref{lem:leafaxil} (a) with $v=u$ reduces the problem to $R_n$ and we can use our induction hypothesis.
If just the leaf $u$ is part of the target graph, we can use Lemma \ref{lem:leafaxil} (b) with $v=w$. 
\end{proof}

\begin{define}[Twin]\label{def:twin}
A twin is a vertex $v$ that has the same neighbourhood as a second vertex $w \neq v$ in the sense that
\begin{equation}
N_v \backslash \{w\} = N_w \backslash \{v\}. 
\end{equation}
\end{define}

\begin{define}[Foliage]\label{def:foliage}
The set containing all the leaves, axils and twins of a graph is called the foliage of that graph.
\end{define}

\begin{theorem}[Foliage is LC invariant]\label{thm:LCfoliage}
 The foliage of a graph $G$ is invariant under local complementation.
\end{theorem}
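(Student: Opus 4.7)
The plan is to prove invariance by introducing \emph{foliage pairs}---unordered pairs $\{u,w\}$ that are either leaf--axil pairs or twin pairs (closed when $(u,w)\in E$, open otherwise). Every foliage vertex belongs to at least one such pair, and conversely every vertex of such a pair lies in the foliage, so it suffices to establish the intermediate claim that if $\{u,w\}$ is a foliage pair of $G$, then it is also a foliage pair of $\tau_v(G)$ for every $v\in V$. Since $\tau_v$ is an involution, applying this claim in both directions yields equality of the foliage-pair sets of $G$ and $\tau_v(G)$, and therefore equality of their foliages.

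The intermediate claim would be proved by case analysis on the position of $v$ relative to $\{u,w\}$. First suppose $\{u,w\}=\{\ell,a\}$ is a leaf--axil pair. If $v=\ell$, then $N_v=\{a\}$ has no internal edges to toggle and $\tau_v$ acts trivially. If $v\notin\{\ell,a\}$, then $\ell\notin N_v$ (the only neighbour of $\ell$ is $a$), so edges incident to $\ell$ are unaffected and the pair persists as leaf--axil. In the remaining case $v=a$, a direct computation shows that $\tau_a$ toggles the edge $(\ell,z)$ for every $z\in N_a\setminus\{\ell\}$, producing $N_\ell=\{a\}\cup(N_a\setminus\{\ell\})$; since $N_a$ itself is unchanged by $\tau_a$, the identity $N_\ell\setminus\{a\}=N_a\setminus\{\ell\}$ holds and $\{\ell,a\}$ becomes a closed twin pair.

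Now suppose $\{u,w\}$ is a twin pair with common non-mutual neighbourhood $S=N_u\setminus\{w\}=N_w\setminus\{u\}$. For $v\notin\{u,w\}$ the twin condition forces $v\sim u \Leftrightarrow v\sim w$, so for any $z\notin\{u,w\}$ the edges $(u,z)$ and $(w,z)$ are toggled by $\tau_v$ together or not at all. This preserves the equivalence $z\in N_u \Leftrightarrow z\in N_w$ after complementation, so the twin condition persists; the edge $(u,w)$ itself is toggled iff $v\in S$, so the pair can only swap between closed and open. In the remaining case $v=u$ (symmetric for $v=w$), $\tau_u$ toggles the edge $(w,s)$ for every $s\in N_u\setminus\{w\}=S$: for closed twins ($w\in N_u$) this strips $w$ of its neighbours in $S$ and leaves $u$ as its sole neighbour, so $w$ becomes a leaf with axil $u$; for open twins ($w\notin N_u$) no edges at $w$ are toggled and the pair remains an open twin pair.

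The argument is essentially careful bookkeeping; the only real obstacle is ensuring that every combination of pair type and location of $v$ is accounted for, and that the transitions (leaf--axil $\leftrightarrow$ closed twin, closed twin $\leftrightarrow$ open twin) are identified correctly within the foliage-pair taxonomy. Once the case analysis is complete, the theorem follows at once.
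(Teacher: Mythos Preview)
Your proof is correct and takes essentially the same approach as the paper: both establish that leaf--axil and twin pairs are preserved (possibly changing type) under every local complementation, tracking the transitions leaf--axil $\leftrightarrow$ closed twin $\leftrightarrow$ open twin and invoking the involutive nature of $\tau_v$. Your case analysis is more explicit and self-contained where the paper leans on a figure for the exhaustive check, but the mathematical content is the same.
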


\begin{proof}
For $G=K_2=\tau_{v}(K_2)$ the statement is trivial as both vertices are leaves, axils, and twins at the same time.
For all other graphs note that a twin can be transformed into a leaf (or an axil) via local complementations: If twins $v$ and $w$ are neighbours, $\tau_w$ disconnects $v$ from all its other neighbours, that 
is, in $\tau_w(G)$ the vertex $v$ is a leaf and $w$ its axil.
If twins $v$ and $w$ are not neighbours, note that a twin pair  always has a common neighbour $u$ (unless $G=K_2$). In the graph $\tau_u(G)$, the two vertices are then neighbouring twins. With the above argument we know that in $\tau_w\circ \tau_u(G)$ vertex $v$ is a leaf and $w$ the corresponding axil.
Conversely, given a pair of leaf $v$ and axil $w$, the local complementation $\tau_w$ connects $v$ to every vertex in $N_w$, i.e., a twin pair is created. Again, choosing a common neighbour $u$ allows us to go to the graph $\tau_u\circ \tau_v(G)$ in which $w$ is a leaf and $v$ its axil. In fact, Fig.~\ref{fig:foliage} shows that all LCs on $u, v, w$ with $u\in N_v$ and/or $u\in N_w$ leave the foliage invariant. Since LCs can only transform the neighbourhood, we thus have shown that leaves, axils and twins can \emph{only} be transformed into each other with local complementations. 

As LCs are self-inverse this concludes the proof: Assume that a node in the foliage can be created via LCs out of a node that is not in the foliage. Then the reverse sequence of LCs would transform a node in the foliage to one that is not  -- contradicting the last sentence of the previous paragraph.
\end{proof}

\begin{figure*}[htpb]
\centering
	\includegraphics[width=1.0\textwidth,
    	trim={5.9cm 0.5cm 5.9cm 1.5cm}, 
    	clip
	]{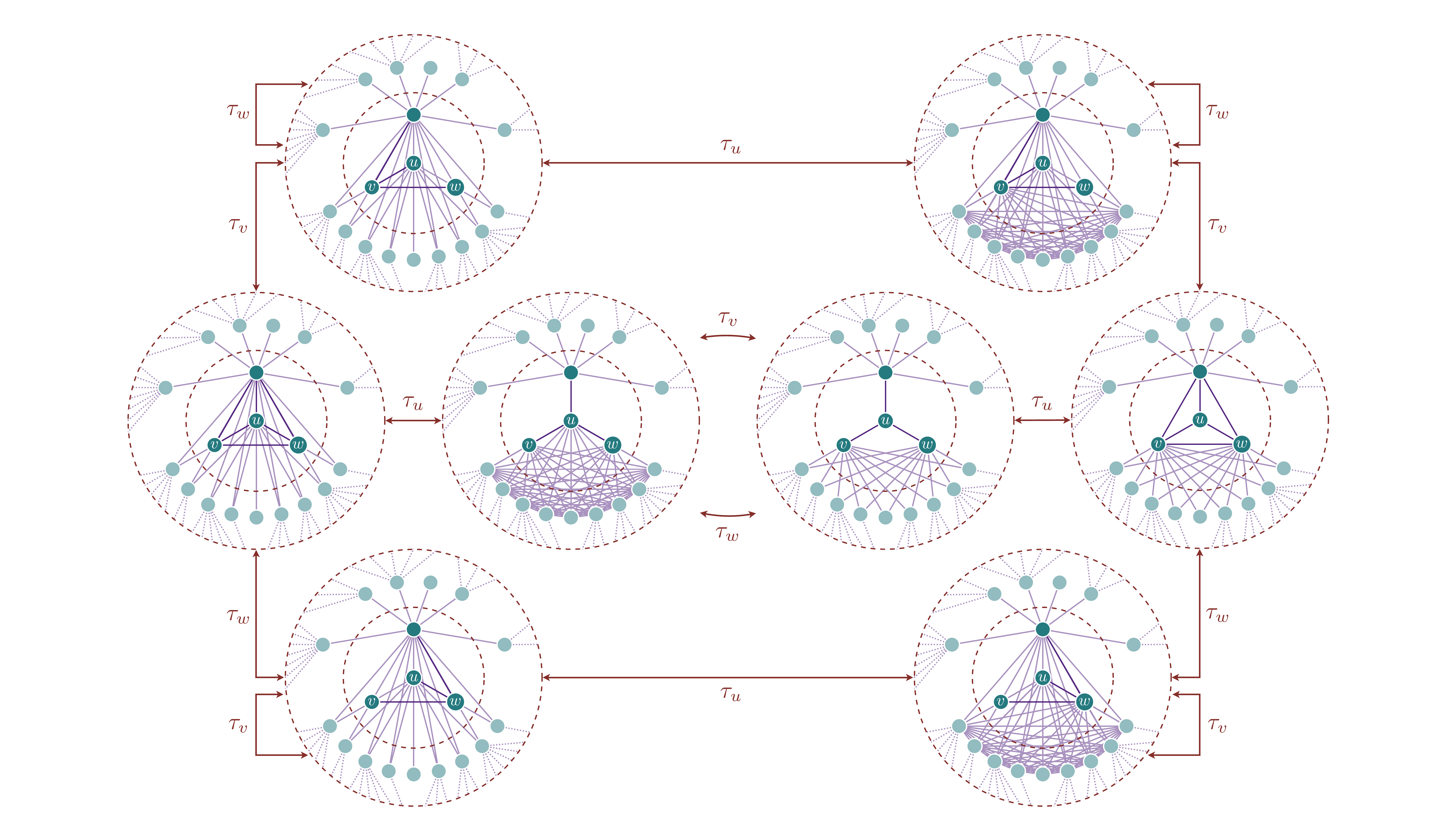}
		\caption{{\bf The foliage is LC invariant.} 
		The arrows indicate how foliage nodes $u, w$ transform under LCs. The central four graphs show twins $v, w$ that can be (dis-)connected via a LC with respect to one of their neighbours $u$.  The LCs $\tau_v$ and $\tau_w$ transform the outer graphs with the connected twins into graphs where either $v$ (upper row) or $w$ (lower row) is an axil and their former twin partner a leaf.}
		\label{fig:foliage}
\end{figure*}

\section{Line graphs}

In analogy to Theorem~\ref{thm:nocrossing}, we can  prove a no-go theorem for line graphs, i.e.,  $L_n\defeq(V_n,E_n)$ with vertices $V_n\defeq\{1,\ldots,n\}$ and edges $E_n\defeq\{(1,2),(2,3),\ldots,(n-1,n)\}$.

\begin{theorem}[No crossing on a line]\label{thm:nocrossingline}
It is not possible to extract two maximally entangled pairs from $|L_n\rangle$ if $a_1<b_1<a_2<b_2$ for any $n\in \mathbb{N}$ with local Clifford operations, local Pauli measurements and classical communication.
\end{theorem}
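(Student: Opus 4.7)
The plan is to mirror the inductive structure used for Theorem~\ref{thm:nocrossing}, now adapted to the line topology, with the crucial observation that the role played there by the line graph appearing as $Z_v(R_{n+1})$ (which forced the appeal to the present theorem) is on a line replaced by a pair of disjoint sub-lines, which is even easier to rule out. The base case is handled directly: for $n\le 4$ either no valid configuration $a_1<b_1<a_2<b_2$ exists or there are no non-target vertices to measure, and since $L_n$ is connected while $K_2\cup K_2$ is not — and local Cliffords together with vertex deletions cannot bridge distinct connected components — the statement is immediate.

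For the inductive step, assume the theorem for $L_n$ and consider $L_{n+1}$. I first dispose of the easy case $b_2<n+1$: then the right endpoint $v=n+1$ is a non-target leaf, and Lemma~\ref{lem:leafaxil}(a) gives $K_2\cup K_2 < L_{n+1} \Leftrightarrow K_2\cup K_2 < L_n$, closed by induction. Otherwise $b_2=n+1$, both endpoints are targets, and every non-target vertex is interior. Fix any such $v$; by Lemma~\ref{lem:XYZ} it suffices to rule out $K_2\cup K_2$ as a vertex-minor of each of $X_v(L_{n+1})$, $Y_v(L_{n+1})$, $Z_v(L_{n+1})$.

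The $Y_v$ case is immediate since $Y_v(L_{n+1})=L_n$ with the crossing pattern preserved, so the induction hypothesis applies. For $Z_v$ the line splits into two disjoint sub-lines on $\{1,\dots,v-1\}$ and $\{v+1,\dots,n+1\}$; since $a_1<b_1<a_2<b_2$, a straightforward case split on the position of $v$ shows that at least one of the required Bell pairs $(a_1,a_2)$ or $(b_1,b_2)$ has its endpoints in different components, and because LCs and vertex deletions cannot connect separated components $K_2\cup K_2 \not< Z_v(L_{n+1})$.

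The $X_v$ case is the main obstacle, but is handled exactly as in Claim~\ref{claim:3} of the ring proof. A direct computation shows that $X_v(L_{n+1})$ is an $n$-vertex line (when $v$ is adjacent to a line endpoint) or an $n$-vertex line with one additional leaf attached to an interior vertex (otherwise), where the former neighbours $u,w$ of $v$ play the roles of leaf and axil. The sub-case analysis then proceeds identically to the ring: if $\{u,w\}$ is a required Bell pair, their adjacency in $L_{n+1}$ leaves no room for another target between them, contradicting $a_1<b_1<a_2<b_2$; if $u,w$ belong to different Bell pairs, the foliage invariance of Theorem~\ref{thm:LCfoliage} keeps them as a leaf-axil or twin pair throughout any subsequent LC sequence — hence in the same connected component — so they cannot be separated into different components of $K_2\cup K_2$; and if at most one of $u,w$ is a target vertex, Lemma~\ref{lem:leafaxil}(a) or (b) applied to the non-target member reduces the graph back to $L_n$ and the induction hypothesis closes the case.
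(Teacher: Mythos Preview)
Your argument is correct and mirrors the paper's own proof: the same induction, the same three-way $X_v/Y_v/Z_v$ split via Lemma~\ref{lem:XYZ}, and the same leaf--axil/foliage treatment of the $X_v$ case through Theorem~\ref{thm:LCfoliage} and Lemma~\ref{lem:leafaxil} (your $Z_v$ argument via component separation is in fact a touch cleaner than the paper's brief appeal to the induction hypothesis on the two sub-lines). One small slip to patch: after disposing of $b_2<n+1$ you declare that \emph{both} endpoints are targets, but you have not yet excluded $a_1>1$; apply Lemma~\ref{lem:leafaxil}(a) symmetrically to the left endpoint before asserting that every non-target vertex is interior.
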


\begin{proof}
Again, the base case for $1 \leq n\leq 4$ is trivial and for $n=5$ and $n=6$ given by \cite{hahn2018quantum}. For the inductive step, we assume that Theorem~\ref{thm:nocrossingline} holds up to a given $n$. By the same argument as above, $K_2 \cup K_2$ is a vertex-minor of $L_{n+1}$ if and only if it is a vertex-minor of at least one of $X(L_{n+1})$, $Y(L_{n+1})$ or $Z(L_{n+1})$. 
We will now show that $K_2 \cup K_2\not < P_v(L_{n+1})$ for all $P_v \in \{X_v,Y_v,Z_v\}$, where we have $v \notin \{a_1,a_2,b_1,b_2\}$. Again, we consider three cases.\\

\begin{linie}[]\label{claim:4}
$K_2 \cup K_2\not < Z_v(L_{n+1})$.
\end{linie}

\noindent
If $v=1$ or $v=n+1$, we find $Z_v(L_{n+1})=L_n$ and can use our induction hypothesis. Otherwise, the $Z_v$-measurement splits the line into two line segments $L_i$ and $L_j$ with $i+j=n$. This implies $i,j \in \{1,2,\ldots, n-1\}$ and we can again use the induction hypothesis.\\

\begin{linie}[]\label{claim:5}
$K_2 \cup K_2\not < Y_v(L_{n+1})$.
\end{linie}

\noindent
We have $Y_v(L_{n+1})=L_n$, a graph on the $n$ vertices $\{1,2,\ldots,v-1,v+1,\ldots,n+1\}$, since local complementation with respect to $v$ connects $v-1$ to $v+1$ but leaves the remaining graph unchanged. Again, $K_2 \cup K_2$ cannot be a vertex-minor of $Y_v(L_{n+1})$ by our induction hypothesis.\\

\begin{linie}[]\label{claim:6}
$K_2 \cup K_2\not < X_v(L_{n+1})$.
\end{linie}

\noindent
If $v=1$ or $v=n+1$, we find $X_v(L_{n+1})=L_{n-1}$ and can use our induction hypothesis. Similarly, in the cases $v=2$ and $v=n$ we get $X_v(L_{n+1})=L_{n}$. In all other cases, we have $X_v(L_{n+1})$ equal to $L_{n-1}$ with an additional leaf, where the leaf-axil pair is made up by the set $\{v-1,v+1\}$. Using the same argument as in the proof of Ring~\ref{claim:3} -- involving Theorem~\ref{thm:LCfoliage} and Lemma \ref{lem:leafaxil} -- we can conclude our proof.\\
\end{proof}

\section{Outlook}

In this work, we build upon quantum network routing research to examine whether commonly used nearest-neighbour  architectures can aid with bypassing bottlenecks. We establish two no-go results for ring and line topologies. We show that, unlike the grid, whose smallest instance is the butterfly network, these two architectures are not suitable for bypassing bottlenecks without additional longer communication links.
This would be possible if we did allow for longer distance 2-local operations; as shown in Fig.~\ref{fig:ring_butterfly} we can indeed transform a ring graph state to a butterfly graph state and thereby enable the generation of crossing maximally entangled pairs.

Our investigation aims to increase our understanding of the potential and limitations of quantum network routing, in times when these settings are moving closer to experimental reality. The techniques that we have developed can potentially be used to eliminate more general quantum communication scenarios with existing bottlenecks. Relevant studies \cite{matsuo_analysis_2018} show how to generate the underlying graph states by sharing maximally entangled pairs between the nodes, while a proof-of-concept implementation using the IBM Quantum Experience has also been demonstrated \cite{pathumsoot_modeling_2020}. Finally, the long-distance communication links that are necessary in order to bypass the bottlenecks shown in this work, can in principle be built following the approach of Ref.~\cite{SMIKW16}. 
	
\section{Acknowledgements}
\noindent F.~H.~acknowledges support from the German Academic Scholarship Foundation, A.~P. from the German Research Foundation (DFG, Emmy Noether grant No. 418294583) and J.~E.~from the BMBF (Q.Link.X and QR.X), J.~E.~and A.~P.~ also acknowledge support from the Einstein Research Unit on Quantum Devices. This work has been initiated during the first Thematic Einstein Semester of the MATH+ Cluster of Excellence.

\begin{figure}[htbp]
\centering
	\includegraphics[width=0.46\textwidth]{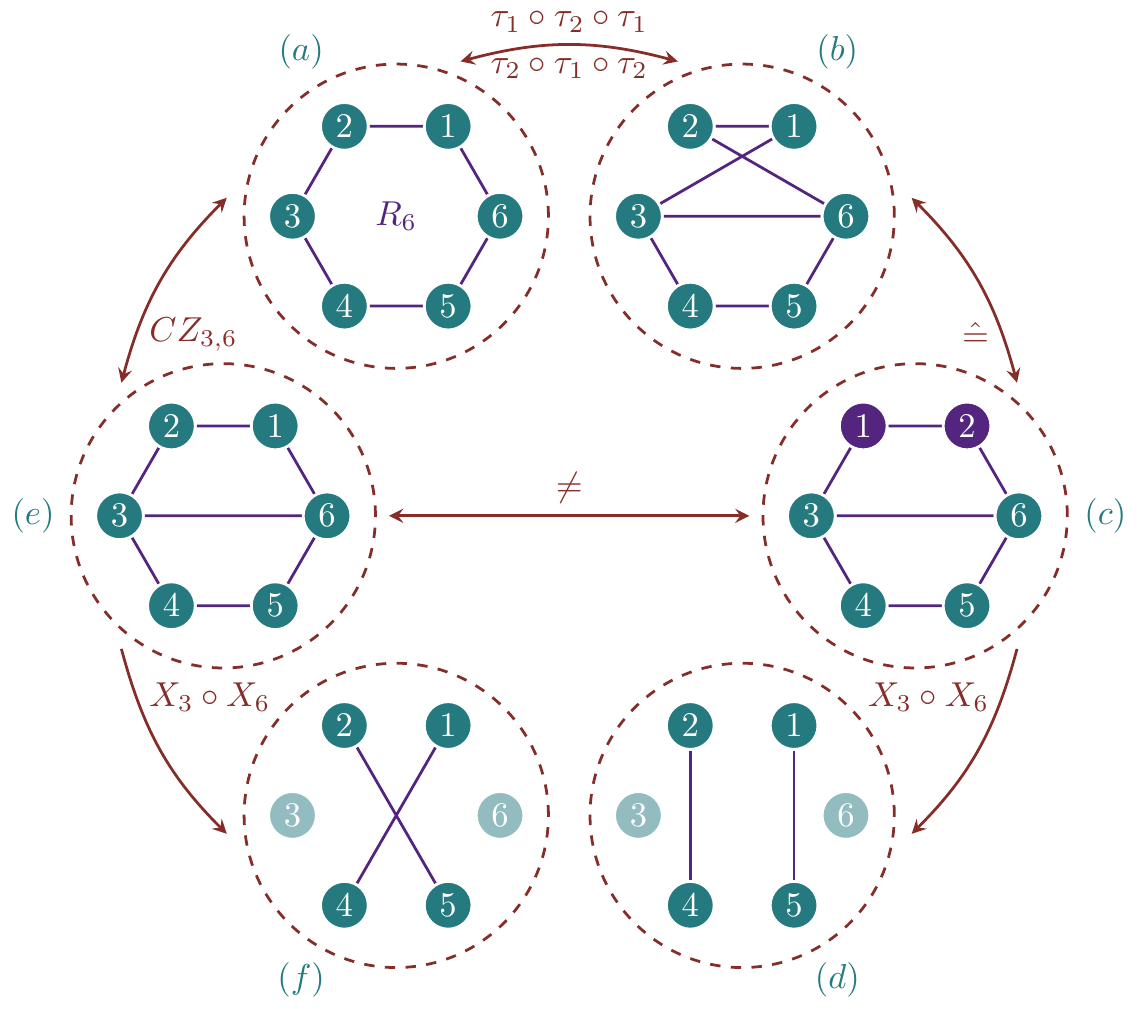}
		\caption{{\bf Ring $\mathbf{(a)}$ and butterfly $\mathbf{(e)}$ network.}
		While ring and butterfly networks are not LC-equivalent, a ring graph $(a)$ can be transformed via LCs into a butterfly-like graph $(c)$, where two nodes, here $1$ and $2$, are swapped ($\hat{=}$).
		The ring $R_6$ is transformed via a local complementation sequence $(a)$, $(b)\hat{=}(c)$. From 
		the resulting graph $(c)$, one can obtain maximally entangled pairs $(d)$ between nodes $(2,4)$ and  $(1,5)$ by measuring $3$ and $6$.
		 When allowing for 2-local operations (specifically $CZ$-gates) between nodes $3$ and $6$ $(e)$ one can obtain crossing Bell pairs $(1,4)$ and  $(2,5)$ $(f)$.}
		\label{fig:ring_butterfly}
\end{figure}

\bibliography{bib}

\end{document}